\newcommand{\ds}{\displaystyle}
\def\p{\partial}
\def\uww{u_{w \bar{w} }}
\def\uwz{u_{w \bar{z} }}
\def\uzw{u_{z \bar{w} }}
\def\uzz{u_{z \bar{z} }}
\def\uw{u_w}
\def\uwbar{u_{\wbar}}
\def\wbar{\bar{w}}
\def\zbar{\bar{z}}
\newcommand{\la}{\Lambda}
\newcommand*{\wbw}[2]{u_{w^{#1}\bar{w}^{#2}}}
\newcommand*{\wbwz}[3]{u_{w^{#1}\bar{w}^{#2}z^{#3}}}
\newcommand*{\wbwbz}[3]{u_{w^{#1}\bar{w}^{#2}\bar{z}^{#3}}}
\newcommand*{\bwbz}[2]{{\bar{w}^{#1}\bar{z}^{#2}}}
\newcommand*{\abs}[1]{\left|#1\right|}
\newcommand{\cprime}{\/{\mathsurround=0pt$'$}}
\let\mathcal\mathscr
\theoremstyle{theorem}
\newtheorem{theorem}{Theorem}
\newtheorem{proposition}{Proposition}
\newtheorem{lemma}{Lemma}
\theoremstyle{definition}
\newtheorem{step}{\emph{Step}}
\date{\today}
\title[Integrability of anti-self-dual vacuum Einstein
equations]{Integrability of anti-self-dual vacuum Einstein equations\\ with
  nonzero cosmological constant:\\ an infinite hierarchy of nonlocal
  conservation laws}
\author{I.~Krasil{\cprime}shchik}\address{V.A.~Trapeznikov Institute of
  Control Sciences RAS, Profsoyuznaya 65, 117342 Moscow,
  Russia \&
Independent University of Moscow, B. Vlasevsky 11, 119002 Moscow, Russia}\email{josephkra@gmail.com}
\author{A.~Sergyeyev}\address{Mathematical Institute, Silesian University in
  Opava, Na Rybn\'{\i}\v{c}ku 1, 746 01 Opava, Czech
  Republic}\email{artur.sergyeyev@math.slu.cz}
\begin{document}
\maketitle

\begin{abstract}
We present an infinite hierarchy of nonlocal conservation laws for the Przanowski equation,
an integrable second-order PDE locally equivalent to anti-self-dual vacuum Einstein equations with nonzero cosmological constant.
The hierarchy in question is constructed using a nonisospectral Lax pair for the equation under study. As a byproduct, we obtain an infinite-dimensional differential covering over the Przanowski equation.
\end{abstract}

\section*{Introduction}

Integrable systems play an important role in modern theoretical and
mathematical physics, see e.g.\ \cite{ac,ba,d,k,kor,kvv,o,asj,as,w,z,z2}, and this is
particularly true for integrable systems in four independent variables a.k.a.\
3+1 dimensions, cf.\ e.g.\ \cite{d,mw,as,z} and references therein, since
according to general relativity our spacetime is four-dimensional. Moreover, a
number of integrable (3+1)-dimensional systems of immediate relevance for
physics arises \emph{within} general relativity upon imposition of
(anti)self-duality conditions, see, for instance,~\cite{ac,adm,mw} and
references therein, as is the case for the Przanowski equation studied below.

Namely, Przanowski \cite{p} has shown that locally every anti-self-dual Einstein
four-manifold $(M,g)$ admits a compatible complex structure and the metric has
the form
\begin{equation}
 \label{metric}
g = 2 \left( \uww\, dw\, d\wbar + \uwz\, dw\, d\zbar + \uzw\, dz\, d\wbar +
  \left( \uzz +
    \frac{2}{\Lambda} \exp(\la u) \right)\, dz\, d\zbar \right).
\end{equation}
Here $u=u(w,\wbar, z, \zbar)$ is a real function on $M$, $w$ and $z$ are
holomorphic coordinates on $M$ and $\wbar$ and $\zbar$ denote their complex
conjugates; $\Lambda \neq 0$ is the cosmological constant, cf.\ e.g.\
\cite{bu} and references therein. As usual, the subscripts indicate partial
derivatives, e.g.\ $u_{w} = \frac{\p u}{\p w}$ etc.

The metric \eqref{metric} is, see~\cite{p}, an anti-self-dual Einstein metric
if and only if $u$ satisfies the Przanowski equation
\begin{equation}
  \label{prz}
  \uzw\uwz-\uww\left(\uzz+\frac{2}{\la}\exp(\la u)\right)+u_w u_{\wbar}\exp(\la
  u)=0,
\end{equation}
which is a subject of intense research, see e.g.~\cite{a, h, kr} and
references therein.

We leave aside the case of $\Lambda=0$ as then, upon having imposed
(anti)self-duality on the metric one has to use a normal form of the metric
different from~\eqref{metric} and arrives instead of~\eqref{prz} at a
different PDE, cf.\ e.g.\ \cite{ac, kr, mw, pl, sh, s} and references therein.

Hoegner \cite{h} has established integrability of~\eqref{prz} by constructing
a nonisospectral Lax pair for~\eqref{prz} of the form
\begin{equation*}
  l_i\psi=0, \quad i=1,2,
\end{equation*}
where $\psi=\psi(z,w,\bar z,\bar w,\xi)$, $Q=-u_w u_{\wbar}\exp(\la u)$, and
\begin{equation*}
  \hspace*{-3mm}
  \begin{array}{rcl}
    l_1&=&\ds\p_w-\xi\frac{\uwz}{Q}\p_{\wbar}+\xi\frac{\uww}{Q}
           \p_{\zbar}+\left(\!\frac{\p_w Q+\exp(\la u)\uw\uww}{Q}- \frac{\uww}{
           \uwbar}\!\right)\xi\p_{\xi},\\[7mm]
    l_2&=&\ds\p_z-\frac{\xi}{Q}\left(\!\uzz+\frac{2}{\Lambda}\exp(\la
           u)\!\right)
           {\p}_{\wbar}+\xi\frac{\uzw}{Q}\p_{\zbar}+\left(\!\frac{\p_zQ+\exp(\la
           u)(\uw\uzw-\uwbar\xi)}{Q}
           -\frac{\uzw}{\uwbar}+\frac{\xi}{\uw}\!\right)\xi\p_{\xi}.
  \end{array}
\end{equation*}
Here $\xi$ is an additional independent variable which plays the role of
(variable) spectral parameter, see~\cite{bzm, ms, as, z2} and
references therein; we stress that $u_\xi\equiv 0$.

The presence of the above Lax pair makes it possible, at least in principle, to
obtain exact solutions for~\eqref{prz} using the inverse scattering transform,
cf.\ e.g.\ \cite{ms} and references therein, or the twistorial methods, see,
for example, \cite{adm, d, h, mw} and references therein.

Existence of infinite hierarchies of conservation laws is a well-known feature
of integrable systems, cf.\ e.g.\ \cite{kvv, o} and references therein, and we
show below how to construct such a hierarchy for the Przanowski equation using
a modification of the above Lax pair. Moreover, we prove that the conservation laws in
question are nontrivial and linearly independent.\looseness=-1

The rest of the article is organized as follows. In Section~\ref{mr} we
construct an infinite hierarchy of the nonlocal conservation laws in question.
In Section~\ref{prelim} we set the stage for Section~\ref{sec:nontr-cons-laws} and our main result,
Theorem~\ref{sec:nontr-thm-cons-laws-1}, establishing linear independence
and nontriviality for the conservation laws from the hierarchy in question. Section~\ref{sec:discussion} contains discussion.\looseness=-1

\section{Infinitely many nonlocal conservation laws for the Przanowski
  equation}\label{mr}

\begin{proposition}Equation \eqref{prz} admits a Lax pair $L_j\chi=0$,
  $j=1,2$\textup{,} where $\chi=\chi(z,\bar z,w,\bar w, p)$ and
  \begin{equation*}
    \hspace*{-3mm}
    \begin{array}{rcl}
      L_1&=&\ds\p_w-\frac{p  u_{w\wbar}}{u_{\wbar}^2}\partial_{\zbar} + \frac{p
             u_{w\zbar}}{u_{\wbar}^2}\partial_{\wbar} + \frac{(u_{\wbar\wbar}
             u_{w\zbar} -u_{w\wbar} u_{\wbar\zbar} +  \Lambda u_{\wbar} (u_{\zbar}
             u_{w\wbar} - u_{\wbar} u_{w\zbar})) p^2}{u_{\wbar}^3}\partial_p
      \\[7mm]
      L_2&=&\ds\p_z-\frac{p u_{\wbar z}}{u_{\wbar}^2}\partial_{\zbar} +
             \frac{(u_{\wbar z} u_{w\zbar} + \exp(\Lambda u) u_w u_{\wbar})
             p}{u_{\wbar}^2 u_{w\wbar}}\partial_{\wbar}
      \\[5mm]
         &&\displaystyle- \frac{(u_{\wbar z}(u_{w\wbar} u_{\wbar\zbar} -
            u_{\wbar\wbar} u_{w\zbar} - \Lambda u_{\wbar} (u_{\zbar} u_{w\wbar} -
            u_{\wbar} u_{w\zbar})) - \exp(\Lambda u) u_{\wbar} (u_{\wbar}
            u_{w\wbar} +u_w u_{\wbar\wbar} - \Lambda u_w u_{\wbar}^2)) p^2
            }{u_{\wbar}^3 u_{w\wbar}}\partial_p.
    \end{array}
  \end{equation*}
\end{proposition}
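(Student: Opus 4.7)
The plan is to exhibit $(L_1,L_2)$ as the image of Hoegner's Lax pair $(l_1,l_2)$ under a change of the spectral parameter; since any such transformation preserves the commutativity $[l_1,l_2]=0$ modulo \eqref{prz}, this will establish the proposition.

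A direct inspection of the coefficients of $\partial_{\bar w}$ and $\partial_{\bar z}$ in $l_1$ and in $L_1$ (neither of which involves a derivative along the spectral variable) shows that the two sets of coefficients coincide precisely under the substitution
\begin{equation*}
\xi=\frac{p\,u_w\exp(\Lambda u)}{u_{\bar w}}.
\end{equation*}
Accordingly, I would set $\chi(z,\bar z,w,\bar w,p)=\psi(z,\bar z,w,\bar w,\xi)$ with $\xi$ given by the formula above, and use the chain-rule identities $\partial_\xi\psi=\xi_p^{-1}\partial_p\chi$ and $\partial_x\psi=\partial_x\chi-(\xi_x/\xi_p)\partial_p\chi$ for $x\in\{w,\bar w,z,\bar z\}$. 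Substituting these into $l_j\psi=0$ produces an operator in $\chi$ whose coefficients of $\partial_w,\partial_z,\partial_{\bar w},\partial_{\bar z}$ already agree with those in $L_j$ by the choice of substitution, while everything else collects into a single coefficient of $\partial_p$.

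It then remains to verify that these newly produced $\partial_p$-coefficients match the ones prescribed in the statement. For $L_1$ a short computation reveals the algebraic cancellation of the $\partial_\xi$-term in $l_1$ with the $\xi_w$-contribution, after which routine simplification of the remainder $-(a\,\xi_{\bar w}+b\,\xi_{\bar z})/\xi_p$ (with $a,b$ the matching coefficients of $\partial_{\bar w}$ and $\partial_{\bar z}$) reproduces the announced expression. For $L_2$ the same procedure applies, but one must additionally invoke \eqref{prz} to rewrite $u_{z\bar z}+(2/\Lambda)\exp(\Lambda u)$ as $(u_{z\bar w}u_{w\bar z}+u_w u_{\bar w}\exp(\Lambda u))/u_{w\bar w}$ before simplification. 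The principal obstacle is the bookkeeping in this final step: each manipulation is individually routine, but the rational expressions involved are moderately long and are most efficiently verified with a computer algebra system. Once the identification is complete, $[L_1,L_2]=0$ on solutions of \eqref{prz} is automatic from the corresponding property of $(l_1,l_2)$.
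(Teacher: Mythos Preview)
Your proposal is correct and follows exactly the paper's own approach: the authors' proof consists of the single observation that $L_i$ arises from $l_i$ under the change of spectral variable relating $p$ and $\xi$, which is precisely the substitution you identify and whose details you spell out. (Your formula $\xi = p\,u_w e^{\Lambda u}/u_{\bar w}$ is the one that actually matches the $\partial_{\bar w}$- and $\partial_{\bar z}$-coefficients; the paper records it as $p=\xi e^{\Lambda u}u_w/u_{\bar w}$, which appears to be a typographical slip in the direction of the substitution.)
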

\begin{proof}
  It suffices to observe that $L_i$ are related to $l_i$ by the change of
  variables $p=\xi\exp(\la u)u_w/u_{\bar w}$.
\end{proof}

The Lax operators $L_i$ enjoy a simpler structure than $l_i$. Indeed, they can
be written as
\begin{equation}\label{l-new}
  \hspace*{-3mm}
  \begin{array}{lcl}
    L_1&=&\ds\p_w+p\left(\frac{1}{u_{\wbar}}\right)_w\partial_{\zbar} - p
           (\omega_1)_w\partial_{\wbar} + p^2 \left(\frac{u_{\wbar}(2\Lambda
           u_{\zbar} +(\omega_1)_{\wbar})-u_{\wbar\zbar}}{2
           u_{\wbar}^2}\right)_w\partial_p,
    \\[7mm]
    L_2&=&\ds\p_z+p\left(\frac{1}{u_{\wbar}}\right)_z\partial_{\zbar} - p
           (\omega_1)_z\partial_{\wbar} + p^2 \left(\frac{u_{\wbar}(2\Lambda
           u_{\zbar} +(\omega_1)_{\wbar})-u_{\wbar\zbar}}{2
           u_{\wbar}^2}\right)_z\partial_p,
  \end{array}
\end{equation}
where $\omega_1$ is defined by the formulas
\begin{equation}\label{omega1}
  \begin{array}{lcl}
    (\omega_1)_w &=&\ds\frac{u_{w\zbar}}{u_{\wbar}^2},
    \\[7mm]
    (\omega_1)_z&=&\ds\frac{(u_{\wbar z} u_{w\zbar} + \exp(\Lambda u) u_w
                    u_{\wbar})}{u_{\wbar}^2 u_{w\wbar}},
  \end{array}
\end{equation}
i.e., the quantity $\omega_1$ is a nonlocal variable, namely, a potential for
the following local conservation law for~\eqref{prz}:
\begin{equation}\label{lcl1}
  \left(\ds\frac{u_{w\zbar}}{u_{\wbar}^2}\right)_z=\left(\ds\frac{(u_{\wbar z}
      u_{w\zbar} + \exp(\Lambda u) u_w u_{\wbar})}{u_{\wbar}^2
      u_{w\wbar}}\right)_w.
\end{equation}

We stress that the derivatives $(\omega_1)_{\bar w}$ and $(\omega_1)_{\bar z}$ are {\em not} determined by \eqref{omega1}, and hence all quantities $(\omega_1)_{\bwbz{k}{l}}$ for $k,l=0,1,2,\dots$ are functionally independent nonlocal variables. In a similar fashion we have infinitely many nonlocal variables $\chi_{\bwbz{k}{l}}^s$ in \eqref{eq:2} below. This is a fairly common phenomenon for nonlocal variables associated with partial differential systems in more than two independent variables, cf.\ e.g.\ Remark 3.9 in \cite{kvv}.\looseness=-1

Substituting a formal Taylor expansion
$\chi=\sum\limits_{i=0}^\infty\chi_i p^i$ into the equations $L_j\chi=0$ shows
that $\chi_0=\chi^0_0(\wbar,\zbar)$ is an arbitrary smooth function of $\wbar$
and $\zbar$, and $\chi_1$ satisfies the equations
\begin{equation}\label{chi1}
  \begin{array}{lcl}
    (\chi_1)_w &=&\ds\frac{u_{w\wbar}}{u_{\wbar}^2}(\chi^0_{0})_{\zbar}
                   -\frac{u_{w\zbar}}{u_{\wbar}^2}(\chi^0_{0})_{\wbar},
    \\[7mm]
    (\chi_1)_z&=&\ds\frac{u_{\wbar z}}{u_{\wbar}^2}(\chi^0_{0})_{\zbar} -
                  \frac{(u_{\wbar z} u_{w\zbar} + \exp(\Lambda u) u_w
                  u_{\wbar})}{u_{\wbar}^2 u_{w\wbar}}(\chi^0_{0})_{\wbar},
  \end{array}
\end{equation}
whence
\begin{equation*}
  \chi_1=-\ds\frac{(\chi^0_{0})_{\zbar}}{u_{\wbar}}- \omega_1
  (\chi^0_0)_{\wbar}+\chi^0_1,
\end{equation*}
where $\chi^0_1(\wbar,\zbar)$ is again an arbitrary smooth function of $\wbar$
and $\zbar$.

We now see that $\chi_i$ for $i=2$, $3,\dots$ satisfy the recursion relations
\begin{equation}\label{cov}
  \begin{array}{lcl}
    (\chi_i)_w &=&\ds-\left(\frac{1}{u_{\wbar}}\right)_w(\chi_{i-1})_{\zbar}
                   +(\omega_1)_w(\chi_{i-1})_{\wbar} -
                   \left(\frac{u_{\wbar}(2\Lambda u_{\zbar}
                   +(\omega_1)_{\wbar})-u_{\wbar\zbar}}{2
                   u_{\wbar}^2}\right)_w (i-1)\chi_{i-1},
    \\[7mm]
    (\chi_i)_z&=&\ds-\left(\frac{1}{u_{\wbar}}\right)_z(\chi_{i-1})_{\zbar}
                  +(\omega_1)_z(\chi_{i-1})_{\wbar} -
                  \left(\frac{u_{\wbar}(2\Lambda u_{\zbar}
                  +(\omega_1)_{\wbar})-u_{\wbar\zbar}}{2 u_{\wbar}^2}\right)_z
                  (i-1)\chi_{i-1}
    \\[5mm]
  \end{array}
\end{equation}
Thus, we have an infinite-dimensional (differential) covering over~\eqref{prz}
defined by~\eqref{cov}, cf.\ e.g.\ \cite{AMS-book, kvv} for general background on coverings.

System~\eqref{cov} gives rise to an infinite hierarchy of nonlocal
conservation laws for~\eqref{prz}:
\begin{equation}\label{ncl}
  (A_{k,r,s})_z=(B_{k,r,s})_w,
\end{equation}
where $k=1$, $2$, $3,\dots$, $r$, $s=0$, $1$, $2$, $3,\dots$, and $A_{k,r,s}$
and $B_{k,r,s}$ are defined by the following relations:
\begin{equation*}
  A_{k,r+1,s}=(A_{k,r,s})_{\bar z}, \quad A_{k,r,s+1}=(A_{k,r,s})_{\bar w},
  \quad B_{k,r+1,s}=(B_{k,r,s})_{\bar z}, \quad B_{k,r,s+1}=(B_{k,r,s})_{\bar
    w},\quad k,r,s=1,2,3,\dots,
\end{equation*}
and for $k=1$, $2$, $3,\dots$ we set
\begin{equation*}
  \begin{array}{rcl}
    A_{k,0,0}&=&\ds-\left(\frac{1}{u_{\wbar}}\right)_w(\chi_{k})_{\zbar}
                 +(\omega_1)_w(\chi_{k})_{\wbar} -
                 \left(\frac{u_{\wbar}(2\Lambda u_{\zbar}
                 +(\omega_1)_{\wbar})-u_{\wbar\zbar}}{2 u_{\wbar}^2}\right)_w
                 k\chi_{k},
    \\[7mm]
    B_{k,0,0}&=&\ds-\left(\frac{1}{u_{\wbar}}\right)_z(\chi_{k})_{\zbar}
                 +(\omega_1)_z(\chi_{k})_{\wbar} -
                 \left(\frac{u_{\wbar}(2\Lambda u_{\zbar}
                 +(\omega_1)_{\wbar})-u_{\wbar\zbar}}{2 u_{\wbar}^2}\right)_z
                 k\chi_{k}.
  \end{array}
\end{equation*}
The conservation laws~\eqref{ncl} are linearly independent and nontrivial, as
we are going to establish in Theorem~\ref{sec:nontr-thm-cons-laws-1}
below.

In closing note that~\eqref{prz} enjoys an obvious discrete symmetry
$w\leftrightarrow\bar{w}$, $z\leftrightarrow\bar{z}$ which however does not
extend to its Lax operators $l_i$ or $L_i$. This implies that there exists
another infinite hierarchy of nonlocal conservation laws for~\eqref{prz}
obtained from~\eqref{ncl} by the simultaneous swap $w\leftrightarrow\bar{w}$,
$z\leftrightarrow\bar{z}$.

\section{Nontriviality of conservation laws: preliminaries}\label{prelim}

\subsection{Simplifications}
\label{sec:two-simplifications}

Before proceeding further, notice that the problem under study admits some
useful simplifications.  Namely, note that
\begin{enumerate}
\item Under the rescaling $u\mapsto\Lambda u$ equation~\eqref{prz} transforms
  into
  \begin{equation*}
    \uzw\uwz-\uww\left(\uzz + 2e^u\right) +u_w
    u_{\wbar}e^u=0
  \end{equation*}
  and thus we can set $\Lambda=1$ in all subsequent computations without loss
  of generality.
\item Coverings~\eqref{chi1} and~\eqref{omega1} are equivalent in the sense
  of~ \cite{Trends}, which \emph{inter alia} implies that we can set without
  loss of generality $\chi_0^0=-\bar{w}$, $\chi_1^0=0$.
\end{enumerate}

Thus, the infinite-dimensional covering defined by \eqref{chi1} and \eqref{cov} boils down to
\begin{equation*}
  \chi_{1,w} =\ds\frac{u_{w\zbar}}{u_{\wbar}^2},\qquad
  \chi_{1,z}=\ds\frac{(u_{\wbar z} u_{w\zbar} + u_w
    u_{\wbar}e^u)}{u_{\wbar}^2 u_{w\wbar}}
\end{equation*}
and
\begin{align*}
  \chi_{i,w} &=\ds-\left(\frac{1}{u_{\wbar}}\right)_w(\chi_{i-1})_{\zbar}
               +(\chi_1)_w(\chi_{i-1})_{\wbar} -
               \left(\frac{u_{\wbar}(2  u_{\zbar}
               +(\chi_1)_{\wbar})-u_{\wbar\zbar}}{2 u_{\wbar}^2}\right)_w
               (i-1)\chi_{i-1}, \\
\chi_{i,z}&=\ds-\left(\frac{1}{u_{\wbar}}\right)_z(\chi_{i-1})_{\zbar}
            +(\chi_1)_z(\chi_{i-1})_{\wbar} - \left(\frac{u_{\wbar}(2
            u_{\zbar} +(\chi_1)_{\wbar})-u_{\wbar\zbar}}{2
            u_{\wbar}^2}\right)_z (i-1)\chi_{i-1}
\end{align*}
for $i>1$.

\subsection{Coordinates and total derivatives}
\label{sec:coord-total-deriv}

We rewrite~\eqref{prz}, where we set $\Lambda=1$ as per
Section~\ref{sec:two-simplifications}, in the form
\begin{equation}\label{prz-1}
  u_{z\bar{z}} = \frac{u_{w\bar{z}}u_{z\bar{w}} - (u_wu_{\bar{w}} -
    2u_{w\bar{w}})e^u}{u_{w\bar{w}}}
\end{equation}
and choose internal coordinates on the associated diffiety~$\mathcal{E}$,
i.e., the infinite prolongation of~\eqref{prz-1}, as follows (see
e.g.~\cite{AMS-book} for the background on geometry of diffieties):
\begin{equation*}
  \wbw{i}{j} = \frac{\partial^{i+j}u}{\partial w^i\partial\bar{w}^j},\quad
  \wbwz{i}{j}{k} = \frac{\partial^{i+j+1}u}{\partial
    w^i\partial\bar{w}^j\partial z^k},\quad \wbwbz{i}{j}{k}  =
  \frac{\partial^{i+j+k}u}{\partial w^i\partial\bar{w}^j\partial\bar{z}^k},
\end{equation*}
where $i$, $j\geq 0$, $k>0$.

Then the total derivatives on $\mathcal{E}$ read
\begin{align*}
  D_w&=\frac{\partial}{\partial w} +
       \sum_{i,j,k}\left(\wbw{i+1}{j}\frac{\partial}{\partial \wbw{i}{j}} +
       \wbwz{i+1}{j}{k}\frac{\partial}{\partial \wbwz{i}{j}{k}} +
       \wbwbz{i+1}{j}{k}\frac{\partial}{\partial \wbwbz{i}{j}{k}}\right),\\
  D_{\bar{w}}&=\frac{\partial}{\partial\bar{w}} +
               \sum_{i,j,k}\left(\wbw{i}{j+1}\frac{\partial}{\partial
               \wbw{i}{j}} + \wbwz{i}{j+1}{k}\frac{\partial}{\partial
               \wbwz{i}{j}{k}} + \wbwbz{i}{j+1}{k}\frac{\partial}{\partial
               \wbwbz{i}{i}{k}}\right),\\
  D_z&=\frac{\partial}{\partial z} +
       \sum_{i,j,k}\left(\wbwz{i}{j}{1}\frac{\partial}{\partial
       \wbw{i}{j}} + \wbwz{i}{j}{k+1}\frac{\partial}{\partial
       \wbwz{i}{j}{k}} + D_w^iD_{\bar{w}}^jD_{\bar{z}}^{k-1}(R)
       \frac{\partial}{\partial \wbwbz{i}{j}{k}}\right),\\
  D_{\bar{z}}&=\frac{\partial}{\partial\bar{z}} +
               \sum_{i,j,k}\left(\wbwbz{i}{j}{1} \frac{\partial}{\partial
               \wbw{i}{j}} + D_w^iD_{\bar{w}}^jD_z^{k-1}(R)
               \frac{\partial}{\partial \wbwz{i}{j}{k}} + \wbwbz{i}{j}{k+1}
               \frac{\partial}{\partial \wbwbz{i}{j}{k}}\right),
\end{align*}
where
\begin{equation*}
  R= \frac{u_{w\bar{z}}u_{z\bar{w}} - (u_wu_{\bar{w}} -
    2u_{w\bar{w}})e^u}{u_{w\bar{w}}}
\end{equation*}
is the right-hand side of~\eqref{prz-1}.

To introduce nonlocal variables, we, for convenience of notation, do some
relabeling, namely, let $\chi^1=\omega_1$ and~$\chi^s = \chi_{s-1}$,
$s\geq2$. Then the nonlocal variables employed below are
\begin{equation*}
  \chi_{\bwbz{k}{l}}^s =
  \frac{\partial^{k+l}\chi^s}{\partial\bar{w}^k\partial\bar{z}^l},\qquad
  i\geq1,\quad k,l\geq 0.
\end{equation*}
The total derivatives lifted to the covering equation are
\begin{equation*}
  \tilde{D}_w = D_w + W,\quad \tilde{D}_{\bar{w}} = D_{\bar{w}} +
  \bar{W},\quad \tilde{D}_z = D_z + Z,\quad \tilde{D}_{\bar{z}} = D_{\bar{z}}
  + \bar{Z},
\end{equation*}
where the nonlocal tails
\begin{equation*}
  W=\sum W_s^{k,l}\frac{\partial}{\partial\chi_{\bwbz{k}{l}}^s},\quad
  \bar{W}=\sum \bar{W}_s^{k,l}\frac{\partial}{\partial\chi_{\bwbz{k}{l}}^s},\quad
  Z=\sum Z_s^{k,l}\frac{\partial}{\partial\chi_{\bwbz{k}{l}}^s},\quad
  \bar{Z}=\sum \bar{Z}_s^{k,l}\frac{\partial}{\partial\chi_{\bwbz{k}{l}}^s}
\end{equation*}
are defined by the formulas
\begin{equation*}
  W_s^{k,l} = \tilde{D}_{\bar{w}}^k\tilde{D}_{\bar{z}}^l(\chi_w^s),\quad
  \tilde{W}_s^{k,l} = \chi_{\bwbz{k+1}{l}}^s,\quad
  Z_s^{k,l} = \tilde{D}_{\bar{w}}^k\tilde{D}_{\bar{z}}^l(\chi_z^s),\quad
  \bar{Z}_s^{k,l} = \chi_{\bwbz{k}{l+1}}^s
\end{equation*}
and $\chi_w^s$, $\chi_z^s$ are as defined above and take the form
\begin{equation}\label{eq:1}
  \begin{array}{rcl}
    \chi_w^1 &=&\dfrac{u_{w\bar{z}}}{u_{\bar{w}}^2},\\
    \chi_z^1&=&\dfrac{u_{\bar{w}z}u_{w\bar{z}} + u_wu_{\bar{w}}
               e^u}{u_{\bar{w}}^2 \wbw{1}{1}},\\
    \chi_w^s &=&-A_w\chi_{\bar{z}}^{s-1} +\chi_w^1\chi_{\bar{w}}^{s-1} -
               (s-1)B_w \chi^{s-1}, \\
    \chi_z^s&=&-A_z\chi_{\bar{z}}^{s-1}+\chi_z^1\chi_{\bar{w}}^{-1} - (s-1)B_z
              \chi^{s-1};
  \end{array}
\end{equation}
here $s\geq2$ and
\begin{equation*}
  A = \frac{1}{u_{\bar{w}}},\qquad B = \frac{u_{\bar{w}}(2 u_{\bar{z}}
               +\chi_{\bar{w}}^1)-u_{\bar{w}\bar{z}}}{2u_{\bar{w}}^2},
\end{equation*}
while
\begin{equation}
  \label{eq:2}
  (\chi_{\bwbz{k}{l}}^s)_w =
  \tilde{D}_{\bar{w}}^k\tilde{D}_{\bar{z}}^l(RW^s),\quad (\chi_{\bwbz{k}{l}}^s)_z
  = \tilde{D}_{\bar{w}}^k\tilde{D}_{\bar{z}}^l(RZ^s),
\end{equation}
where $RW^s$, $RZ^s$ are the right-hand sides of~\eqref{eq:1}.

In what follows we shall need the following presentation of the
coefficients~$RW^s$:
\begin{align}
  RW^1&= \frac{u_{w\bar{z}}}{u_{\bar{w}}^2},\nonumber\\
  RW^2&=\frac{1}{2}\cdot\left(\frac{1}{u_{\bar{w}}^2} -
        \frac{1}{u_{\bar{w}}^3}\right)u_{w\bar{w}\bar{z}}\chi^1 +
        \left(\frac{u_{w\bar{z}}}{u_{\bar{w}}^2} +
        \frac{1}{2}\frac{u_{w\bar{w}}}{u_{\bar{w}}^2}\right) \chi_{\bar{w}}^1
        + \frac{u_{w\bar{w}}}{u_{\bar{w}}^2}\chi_{\bar{z}}^1 + o,\label{eq:10}\\
  RW^s&=\frac{1}{2}\cdot\left(\frac{1}{u_{\bar{w}}^2} -
        \frac{1}{u_{\bar{w}}^3}\right)u_{w\bar{w}\bar{z}}(s-1)\chi^{s-1} +
        \frac{u_{w\bar{z}}}{u_{\bar{w}}^2}\chi_{\bar{w}}^{s-1} +
        \frac{{u_{w\bar{w}}}}{u_{\bar{w}}^2}\chi_{\bar{z}}^{s-1} + o,\nonumber
\end{align}
where $s>2$ and $o$ denotes the terms of lower jet order both in~$u$ and~$\chi^s$
which are inessential for the subsequent computations.

\section{Nontriviality of conservation laws: the proof}
\label{sec:nontr-cons-laws}

Equations~\eqref{eq:1}--\eqref{eq:2} define an infinite family of (nonlocal)
conservation laws
\begin{equation}
  \label{eq:3}
  \omega_{k,l}^i = \tilde{D}_{\bar{w}}^k\tilde{D}_{\bar{z}}^l(RW^i)\,dw +
  \tilde{D}_{\bar{w}}^k\tilde{D}_{\bar{z}}^l(RZ^i) \,dz,\qquad i\geq1,\quad
  k,l \geq 0,
\end{equation}
for equation~\eqref{prz-1}.

In other words, on $\mathcal{E}$ we have
\begin{equation}\label{cldiv}
\tilde{D}_z(\rho_{k,l}^i)-\tilde{D}_w(\sigma_{k,l}^i)=0
\end{equation}
where $\rho_{k,l}^i=\tilde{D}_{\bar{w}}^k\tilde{D}_{\bar{z}}^l(RW^i)$ and $\sigma_{k,l}^i=\tilde{D}_{\bar{w}}^k\tilde{D}_{\bar{z}}^l(RZ^i)$.

Note that all these conservation laws are two-component in the sense that
expressions~\eqref{cldiv} involve only two total derivatives, $\tilde{D}_z$
and $\tilde{D}_w$, out of four.


It could be of interest to find out whether~\eqref{prz-1} also has three- or
four-component conservation laws, cf.\ e.g.\ \cite{lm, mak} and references
therein, and to explore nonlocal symmetries for~\eqref{prz-1} involving
nonlocal variables $\chi_{\bwbz{k}{l}}^s$ being potentials for the
conservation laws~\eqref{eq:3}. Note that our computations show that there
are no three- or four-component {\em local} conservation laws of order up to four
for~\eqref{prz-1}, and we strongly suspect that none exist even if we proceed to higher orders, cf.\ the discussion at the end of Section~\ref{sec:discussion}. However, the matter of
existence of {\em nonlocal} three- or four-component conservation laws
for~\eqref{prz-1} remains an interesting open problem.

We now intend to prove that the system of the
conservation laws under study is nontrivial. Let us clarify this claim.

The system of conservation laws~$\omega_{k,l}^1$ defines the tower of coverings
\begin{equation*}
  \xymatrix{
    \mathcal{E}=\mathcal{E}^0& \ar[l]\mathcal{E}_1^1&
    \ar[l]\mathcal{E}_2^1& \ar[l]\dots&
    \ar[l]\mathcal{E}_i^1& \ar[l]\dots& \ar[l]\mathcal{E}^1\rlap{,}
  }
\end{equation*}
where~$\mathcal{E}$ is the infinite prolongation of the Przanowski equation,
while the covering equations~$\mathcal{E}_i^1$ contain the nonlocal
variables~$\chi_{\bwbz{k}{l}}^1$, $0\leq k+l\leq i$, with~$\mathcal{E}^1$ being the
inverse limit. In a similar fashion, we define by induction the towers
\begin{equation}\label{eq:4}
  \xymatrix{
    \mathcal{E}^{s}& \ar[l]\mathcal{E}_1^{s+1}& \ar[l]\mathcal{E}_2^{s+1}&
    \ar[l]\dots&
    \ar[l]\mathcal{E}_i^{s+1}& \ar[l]\dots& \ar[l]\mathcal{E}^{s+1}\rlap{,}
  }
\end{equation}
and
\begin{equation}\label{eq:5}
  \xymatrix{
    \mathcal{E}& \ar[l]\mathcal{E}^1& \ar[l]\mathcal{E}^2& \ar[l]\dots&
    \ar[l]\mathcal{E}^s& \ar[l]\dots& \ar[l]\mathcal{E}^*\rlap{.}
  }
\end{equation}
We are going to prove the following
\begin{theorem}\label{sec:nontr-thm-cons-laws-1}
  For any $i\geq 0$\textup{,} an arbitrary finite system of conservation
  laws~$\omega_{k,l}^{i+1}$ of the equation~$\mathcal{E}^i$ is linearly
  independent.
\end{theorem}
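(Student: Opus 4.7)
The plan is to argue by induction on $i$, driven by the explicit leading-order structure of $RW^{i+1}$ given by~\eqref{eq:10}. For the base case $i=0$, the components $\rho^1_{k,l} = \tilde{D}_{\bar{w}}^k\tilde{D}_{\bar{z}}^l(u_{w\bar{z}}/u_{\bar{w}}^2)$ are polynomials in jets of $u$ only, and their leading terms $u_{w\bar{w}^k\bar{z}^{l+1}}/u_{\bar{w}}^2$ are distinct for different $(k,l)$. Matching these highest-order jets against $\tilde{D}_w f$ (or taking the variational derivative with respect to the appropriate $u$-jets) shows that no nontrivial combination of $\omega^1_{k,l}$ can equal $(\tilde{D}_w f,\tilde{D}_z f)$ for any $f$ on~$\mathcal{E}$.

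For the inductive step, assume linear independence on $\mathcal{E}^{s}$ for $s<i$, and suppose a finite nontrivial combination satisfies $\sum_{(k,l)\in S} c_{k,l}\,\rho^{i+1}_{k,l} = \tilde{D}_w f$ and $\sum c_{k,l}\,\sigma^{i+1}_{k,l} = \tilde{D}_z f$ for some $f$ on~$\mathcal{E}^i$. Since~\eqref{eq:10} exhibits $RW^{i+1}$ as linear in the top-level jets $\chi^i_{\bar{w}^a\bar{z}^b}$ modulo terms in strictly lower-level nonlocal variables, matching polynomial degree in these jets lets me take $f=f_0+\sum_{a,b} g_{a,b}\chi^i_{\bar{w}^a\bar{z}^b}$ with $f_0,g_{a,b}$ on~$\mathcal{E}^{i-1}$. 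Equating the coefficient of each $\chi^i_{\bar{w}^a\bar{z}^b}$ in both the $dw$- and $dz$-parts then yields a pair $(P_{a,b},Q_{a,b})$ forced to satisfy $P_{a,b}=\tilde{D}_w(g_{a,b})$ and $Q_{a,b}=\tilde{D}_z(g_{a,b})$ on~$\mathcal{E}^{i-1}$. Explicitly, $P_{a,b}$ and $Q_{a,b}$ are built from the $c_{k,l}$ via Leibniz-expanded derivatives of the three leading quantities $\alpha=u_{w\bar{w}}/u_{\bar{w}}^2$, $\beta=u_{w\bar{z}}/u_{\bar{w}}^2$ and $\gamma_i=\tfrac{i}{2}(u_{\bar{w}}^{-2}-u_{\bar{w}}^{-3})u_{w\bar{w}\bar{z}}$ identified in~\eqref{eq:10}, together with their $z$-analogues. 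The $\alpha$- and $\beta$-contributions are already of the form $(\tilde{D}_w,\tilde{D}_z)$ of $-1/u_{\bar{w}}$ and $\chi^1$ respectively, so they can be absorbed into the free functions~$g_{a,b}$. The $\gamma_i$-contribution, distinguished by the factor $i$ that arose from the $(s-1)$ in~\eqref{eq:10}, cannot be; isolating this residue at the lexicographically maximal $(K,L)\in S$ forces $c_{K,L}=0$, and a descending induction on the multi-index eliminates the remaining coefficients.

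The principal technical obstacle is the bookkeeping of the $o$-terms in~\eqref{eq:10}: they are polynomial in the lower-level $\chi^{s}$-jets and can in principle contribute to the extracted coefficients of each $\chi^i_{\bar{w}^a\bar{z}^b}$. The crux of the proof is to verify that no combination of such corrections, together with the absorbed $\alpha$- and $\beta$-pieces, can cancel the $\gamma_i$-residue on~$\mathcal{E}^{i-1}$. This amounts to a careful jet-order count in both the $u$- and the $\chi^s$-variables, and is precisely where the inductive hypothesis on~$\mathcal{E}^{i-1}$ enters.
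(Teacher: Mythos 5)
Your proposal is a strategy outline rather than a proof, and the decisive step is exactly the one you defer. You write that ``the crux of the proof is to verify that no combination of such corrections\ldots can cancel the $\gamma_i$-residue'' --- but that verification \emph{is} the theorem, and your sketch gives no mechanism for it. Worse, the residue you single out is not actually distinguished in the way you claim: by~\eqref{eq:1} the term carrying the factor $(s-1)$ is $-(s-1)\tilde{D}_w(B)\,\chi^{s-1}$ with $B$ a genuine function on $\mathcal{E}^1$, so it has precisely the same structural form $\tilde{D}_w(\phi)\cdot\psi$ as the $\alpha$- and $\beta$-contributions $-\tilde{D}_w(A)\chi^{s-1}_{\bar z}$ and $\tilde{D}_w(\chi^1)\chi^{s-1}_{\bar w}$; the numerical factor $i$ by itself obstructs nothing. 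Two further gaps: (i) a function $f$ on $\mathcal{E}^i$ need not be polynomial in the nonlocal variables, so the reduction to $f=f_0+\sum g_{a,b}\chi^i_{\bar w^a\bar z^b}$ requires an argument about the kernel of the lower-level total derivatives, not just ``matching polynomial degree''; (ii) your inductive hypothesis --- linear independence of the $\omega^{j}_{k,l}$ for $j\le i$ --- is not in a form you can apply to the pairs $(P_{a,b},Q_{a,b})$, which are not linear combinations of those conservation laws; what you actually need is that the only functions on $\mathcal{E}^{i-1}$ annihilated by all four lifted total derivatives are constants. Even the base case is incomplete: cancelling the top jets $u_{w\bar w^k\bar z^{l+1}}/u_{\bar w}^2$ against $D_w$ of $\sum c_{k,l}u_{\bar w^k\bar z^{l+1}}/u_{\bar w}^2$ leaves lower-order terms that must still be shown nontrivial.

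The missing bridge is supplied in the paper by Proposition~\ref{sec:coord-total-deriv-prop} (from~\cite{IiC}): linear independence in the sense of Theorem~\ref{sec:nontr-thm-cons-laws-1} is equivalent to differential connectedness of the covering tower, i.e.\ to the statement that $\tilde{D}_w(f)=\tilde{D}_{\bar w}(f)=\tilde{D}_z(f)=\tilde{D}_{\bar z}(f)=0$ forces $f=\mathrm{const}$. The paper then never touches the triviality equation you set up; instead it computes $\ker\tilde{D}_w$ directly, by a double induction (on the level $s$ of nonlocal variables and on the jet order $n$), first constraining which variables $F\in\ker\tilde{D}_w$ may depend on (Lemmas~\ref{sec:nontr-lemma-cons-laws-2}--\ref{sec:nontr-lemma-cons-laws-4}) and then repeatedly commuting the extracted vector fields $Z_i$ with $\tilde{D}_w$ to strip away dependence on the nonlocal jets. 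If you want to salvage your direct approach, you would first have to prove an analogue of that equivalence and then carry out essentially the same kernel computation, so nothing is gained; as it stands the argument does not close.
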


The nontriviality of these conservation laws is, in view of their structure,
see~\eqref{ncl}, a straightforward consequence of their linear independence.

The proof of Theorem~\ref{sec:nontr-thm-cons-laws-1} will be based on the following
\begin{proposition}[see~\cite{IiC}, cf.\ also~\cite{KS-2015,KSM-2016}]
  \label{sec:coord-total-deriv-prop}
  Let~$\mathcal{E}$ be a differentially connected equation\footnote{Recall
    that an equation is called differentially connected if the only functions
    that are invariant with respect to all total derivatives on~$\mathcal{E}$
    are constants.}. The conservation laws~$\omega_{k,l}^i$ are mutually
  independent in the sense of Theorem~\textup{\ref{sec:nontr-thm-cons-laws-1}}
  if and only if~$\mathcal{E}^*$ is differentially connected as well\textup{,}
  i.e.\textup{,} the only solutions of the system
  \begin{equation*}
    \tilde{D}_w(f) = \tilde{D}_{\bar{w}}(f) = \tilde{D}_z(f) =
    \tilde{D}_{\bar{z}}(f)
    =0,\qquad f\in C^\infty(\mathcal{E}^*),
  \end{equation*}
  are constants.
\end{proposition}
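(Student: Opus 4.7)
The proposition is a general structural link between exactness of two-component conservation laws on $\mathcal{E}^i$ and existence of invariants on the tower $\mathcal{E}^*$; my plan is to prove it by induction on the level of the tower, reducing to the single inductive step: if $\mathcal{E}^i$ is differentially connected, then $\mathcal{E}^{i+1}$ is differentially connected if and only if the family $\{\omega^{i+1}_{k,l}\}_{k,l\geq 0}$ is linearly independent on $\mathcal{E}^i$. The key structural input used throughout is that each adjoined potential $\chi^{i+1}_{\bar w^k\bar z^l}$ enters $\tilde D_w,\tilde D_z$ through the conservation law $\omega^{i+1}_{k,l}$, while $\tilde D_{\bar w}$ and $\tilde D_{\bar z}$ act on them as unit-coefficient shift operators $\chi^{i+1}_{\bar w^k\bar z^l}\mapsto\chi^{i+1}_{\bar w^{k+1}\bar z^l}$ and $\chi^{i+1}_{\bar w^k\bar z^l}\mapsto\chi^{i+1}_{\bar w^k\bar z^{l+1}}$.

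For the implication ``differential connectedness of $\mathcal{E}^*$ $\Rightarrow$ independence,'' I argue by contrapositive: given a nontrivial dependence $\sum c_{k,l}\omega^{i+1}_{k,l}=dg$ with $g\in C^\infty(\mathcal{E}^i)$ and not all $c_{k,l}$ zero, I would produce a non-constant invariant on $\mathcal{E}^*$. The function $F=\sum c_{k,l}\chi^{i+1}_{\bar w^k\bar z^l}-g$ satisfies $\tilde D_w F=\tilde D_z F=0$ by construction. Although $\tilde D_{\bar w}F$ and $\tilde D_{\bar z}F$ are generically nonzero, applying $D_{\bar w}^a D_{\bar z}^b$ to the dependence yields shifted dependencies $\sum c_{k,l}\omega^{i+1}_{k+a,l+b}=d(D_{\bar w}^a D_{\bar z}^b g)$, and pairing $F$ with the corresponding potentials of these shifted conservation laws (which live at higher levels $\mathcal{E}^{i+2},\mathcal{E}^{i+3},\dots$) in an iterative correction procedure yields an $\tilde F\in C^\infty(\mathcal{E}^*)$ annihilated by all four total derivatives; nontriviality of the $c_{k,l}$'s guarantees that $\tilde F$ is non-constant.

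For the reverse implication ``independence $\Rightarrow$ differential connectedness,'' I again argue contrapositively, this time via leading-term analysis. Let $f\in C^\infty(\mathcal{E}^*)$ be non-constant and $\tilde D_*$-invariant; finite jet-dependence places $f\in C^\infty(\mathcal{E}^{i+1})$ for minimal $i$, so $f$ depends essentially on at least one $\chi^{i+1}_{\bar w^{k_0}\bar z^{l_0}}$ (otherwise $f\in C^\infty(\mathcal{E}^i)$ would be constant by the inductive hypothesis). Because $\tilde D_{\bar w}$ and $\tilde D_{\bar z}$ act as unit-coefficient shifts, extracting the coefficient of the new top-lex monomial $\chi^{i+1}_{\bar w^{k_0+1}\bar z^{l_0}}$ in $\tilde D_{\bar w}f=0$ yields a relation among $\chi^{i+1}$-monomials in $C^\infty(\mathcal{E}^{i+1})$, which by iterated analysis forces $f$ to collapse to an affine combination $\sum c_{k,l}\chi^{i+1}_{\bar w^k\bar z^l}+h$ with $h\in C^\infty(\mathcal{E}^i)$ and constants $c_{k,l}$ not all zero. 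The remaining equations $\tilde D_w f=\tilde D_z f=0$ then read $\sum c_{k,l}\rho^{i+1}_{k,l}=-D_w h$ and $\sum c_{k,l}\sigma^{i+1}_{k,l}=-D_z h$, exhibiting the promised linear dependence of the $\omega^{i+1}_{k,l}$ on $\mathcal{E}^i$.

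The main obstacle is the leading-term reduction in the second implication: one must justify that an a priori smooth function of finitely but arbitrarily many nonlocal variables really collapses to an affine combination of the top-level $\chi^{i+1}_{\bar w^k\bar z^l}$'s. The delicate point is that coefficient extraction from the shift invariance equations is meaningful only once one knows which $\chi^{i+1}$-monomials are linearly independent in $C^\infty(\mathcal{E}^{i+1})$, and dependencies $\sum c_{k,l}\omega^{i+1}_{k,l}=dg$ are precisely the obstructions to such independence. The cleanest way I would handle the bookkeeping is to work on the associated graded of $C^\infty(\mathcal{E}^{i+1})$ filtered by jet order in the new $\chi^{i+1}$-variables, where $\tilde D_{\bar w}$ and $\tilde D_{\bar z}$ become free shift operators and coefficient extraction is automatic modulo the ideal generated by the dependencies; once this filtration is set up, the two directions of the equivalence become formally dual and the proposition follows.
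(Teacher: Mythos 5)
First, a point of order: the paper does not prove this proposition at all --- it is imported as a black box from~\cite{IiC} (cf.\ also~\cite{KS-2015,KSM-2016}) and then \emph{used} in the proof of Theorem~\ref{sec:nontr-thm-cons-laws-1}. So there is no internal proof to compare your argument with, and I can only judge the proposal on its own terms. Your overall architecture (induction up the tower, reduction to a single storey, two contrapositive implications) is the natural one, but the execution has a genuine gap, and it occurs at the same place in both directions: you treat the construction as if it were the classical two-variable Abelian covering, whereas here only $\tilde{D}_w$ and $\tilde{D}_z$ are controlled by the conservation laws, while $\tilde{D}_{\bar{w}}$ and $\tilde{D}_{\bar{z}}$ act on the doubly-indexed potentials $\chi^{i+1}_{\bar{w}^k\bar{z}^l}$ as free shifts into \emph{new, functionally independent} fibre coordinates --- a point the paper stresses explicitly right after~\eqref{lcl1}.

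Concretely: in your first implication the candidate invariant $F=\sum c_{k,l}\chi^{i+1}_{\bar{w}^k\bar{z}^l}-g$ is annihilated by $\tilde{D}_w$ and $\tilde{D}_z$ only, and $\tilde{D}_{\bar{w}}F=\sum c_{k,l}\chi^{i+1}_{\bar{w}^{k+1}\bar{z}^l}-\tilde{D}_{\bar{w}}g$ is again a nonconstant function of the same type; the ``iterative correction'' would need potentials of $\tilde{D}_{\bar{w}}F$, $\tilde{D}_{\bar{z}}F$ with respect to $\bar{w}$, $\bar{z}$, but the higher storeys of $\mathcal{E}^*$ adjoin potentials of the \emph{specific} conservation laws $RW^{i+2}$, $RZ^{i+2}$ from~\eqref{eq:1}--\eqref{eq:2}, not of these, so the correction cannot be completed inside $\mathcal{E}^*$. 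Worse, since the $\chi^{i+1}_{\bar{w}^{k+1}\bar{z}^l}$ are independent coordinates on which no function from $C^\infty(\mathcal{E}^i)$ depends, \emph{no} nonconstant affine combination with constant coefficients can lie in $\ker\tilde{D}_{\bar{w}}$; the same observation shows that the ``collapse to an affine combination'' you posit in the second implication is incompatible with $\tilde{D}_{\bar{w}}f=0$ unless all $c_{k,l}$ vanish --- so, were the collapse correct, your argument would show that $\mathcal{E}^*$ is always differentially connected, i.e., it proves too much. The upshot is that the notions of dependence and triviality relevant here must themselves be formulated relative to the covering structure (which is precisely what~\cite{IiC} does), and the hard content --- forcing an arbitrary invariant to interact with the conservation laws --- is not obtainable from free-shift bookkeeping on an associated graded; it is exactly what the detailed commutator computations in Steps~1--3 of the paper's proof of Theorem~\ref{sec:nontr-thm-cons-laws-1} are designed to supply.
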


\begin{proof}[Proof of Theorem~\ref{sec:nontr-thm-cons-laws-1}]
  We begin the proof with an obvious observation that the Przanowski
  equation~\eqref{prz-1} is differentially connected.  Before proceeding further
  with the proof of the theorem, let us briefly describe the outline of the former.
  Namely, we will prove
  that the space $\ker \tilde{D}_w$ consists of functions that depend
  on~$\bar{w}$, $z$, and $\bar{z}$ only, from where the desired result immediately follows.
  To this end, we will perform double induction: on~$i$
  in~\eqref{eq:5} and on~$s$ in~\eqref{eq:4} for each~$i$. The case~of~$i=1$ is
  special and is considered separately. So, the base of induction is~$i=2$.\looseness=-1

  Let us employ the notation~$\mathcal{F}(\mathcal{U},n_1,\dots,n_s)$ for the
  space of functions that belong to  $\ker \tilde{D}_w$ and depend on a finite set~$\mathcal{U}$ of internal
  coordinates in~$\mathcal{E}$ and of nonlocal
  variables~$\chi_{\bwbz{k}{l}}^\alpha$, $\alpha = 1,\dots,s$,
  $k+l \leq n_\alpha$.

  \begin{lemma}\label{sec:nontr-lemma-cons-laws-2}
    If $F\in \mathcal{F}(\mathcal{U},n_1,\dots,n_s)$ then $n_s > n_{s-1} >
    \dots > n_1$.
  \end{lemma}
  \begin{proof}[Proof of Lemma~\ref{sec:nontr-lemma-cons-laws-2}]
    The desired result is a straightforward consequence of the defining
    equations~\eqref{eq:1} and~\eqref{eq:2}.
  \end{proof}
  Let us now pass to the proof of the theorem.
  \begin{step}[$s=1$]\label{sec:nontr-cons-laws-step-1} We prove here by
    induction on $n_1$ that the conservation laws~$\omega_{k,l}^1$ are linearly
    independent.  Denote $n_1 = n$
    and perform induction on~$n$.

    First, let $n = 0$. Then
    \begin{equation*}
      \tilde{D}_w(F) = D_w(F) +
      \frac{u_{w\bar{z}}}{u_{\bar{w}}^2}\frac{\partial F}{\partial \chi^1} =0.
    \end{equation*}
    This means that the set~$\mathcal{U}$ may consist of the variables~$w$,
    $\bar{w}$, $z$, $\bar{z}$, and~$u_{\bar{z}}$ only, i.e.,
    \begin{equation*}
      \frac{\partial F}{\partial w} + u_{w\bar{z}}\frac{\partial F}{\partial
        u_{\bar{z}}} + \frac{u_{w\bar{z}}}{u_{\bar{w}}^2}\frac{\partial
        F}{\partial \chi^1} =0
    \end{equation*}
    and consequently
    \begin{equation*}
      \frac{\partial F}{\partial
        u_{\bar{z}}} + \frac{1}{u_{\bar{w}}^2}\frac{\partial
        F}{\partial \chi^1} =0.
    \end{equation*}
    Thus, $F=F(\bar{w},z,\bar{z})$.

    Next, let $n > 0$. Then
    \begin{equation*}
      \tilde{D}_w = D_w + \sum_{k+l\leq
        n}\tilde{D}_{\bar{w}}^k\tilde{D}_{\bar{w}}^l
      \left(\frac{u_{w\bar{z}}}{u_{\bar{w}}^2}\right) \frac{\partial}{\partial
      \chi_{\bar{w}^k\bar{z}^l}^1}.
    \end{equation*}
    In this expession, the variables~$u_\sigma$ of the maximal jet order
    are~$u_{w\bar{w}^k\bar{z}^{l+1}}$, $k+l = n$. Consequently, $F$ is
    invariant with respect to the vector fields
    \begin{equation*}
      Z_{k,l} = \frac{\partial}{\partial u_{\bar{w}^k\bar{z}^{l+1}}} +
      \frac{1}{u_{\bar{w}}^2} \cdot \frac{\partial}{\partial
      \chi_{\bar{w}^k\bar{z}^l}^1}, \qquad k+l = n.
    \end{equation*}
    Now notice that the coefficients of~$\tilde{D}_w$ at~$\partial/\partial
    \chi_{\bar{w}^k\bar{z}^l}^1$ are independent of the
    variables~$u_{w\bar{w}^i}$, $i > 0$. From this fact it immediately follows
    that~$F$ cannot depend on~$u_{\bar{w}}^i$, $i>0$. Hence, taking
    commutators of the vector fields~$Z_{k,l}$ with~$\partial/\partial u_{\bar{w}}$
    we obtain that~$F$ is invariant with respect to the
    derivations~$\partial/\partial \chi_{\bar{w}^k\bar{z}^l}^1$, and this
    completes the induction step.
  \end{step}

  \begin{step}[$s=2$, the base of induction
    on~$s$]\label{sec:nontr-cons-laws-step-2}
    Let now $F\in\mathcal{F}(\mathcal{U},n_1,n_2)$.
    \begin{lemma}\label{sec:nontr-lemma-cons-laws-3}
      $n_1 = n_2 + 1$.
    \end{lemma}
    \begin{proof}[Proof of Lemma~\ref{sec:nontr-lemma-cons-laws-3}]
      Consider the total derivative
      \begin{equation}\label{eq:6}
        \tilde{D}_w(F) = \frac{\partial F}{\partial w} + \sum_{\mathcal{U}}
        u_{w\sigma} \frac{\partial F}{\partial u_\sigma} + \sum_{k+l\leq n_1}
        \tilde{D}_{\bar{w}}^k\tilde{D}_{\bar{z}}^l(RW^1)\frac{\partial F}{\partial
          \chi_{\bwbz{k}{l}}^1} + \sum_{k+l\leq n_2}
        \tilde{D}_{\bar{w}}^k\tilde{D}_{\bar{z}}^l(RW^2)\frac{\partial F}{\partial
          \chi_{\bwbz{k}{l}}^2}.
      \end{equation}
      Assume that $n_1 > n_2 + 1$. Then differentiating $\tilde{D}_w(F)=0$ with
      respect to~$u_{w\bwbz{k}{l+1}}$ using \eqref{eq:6} we find that~$F$ is invariant with
      respect to the vector field
      \begin{equation*}
        Z = \frac{\partial}{\partial u_{\bwbz{k}{l+1}}} +
        \frac{1}{u_{\bar{w}}^2} \frac{\partial}{\partial \chi_{\bwbz{k}{l}}^1}.
      \end{equation*}
      Consequently, it is invariant with respect to the commutator
      \begin{equation*}
        [\tilde{D}_w,Z] = \tilde{D}_w\left(\frac{1}{u_{\bar{w}}^2}\right)
        \frac{\partial}{\partial \chi_{\bwbz{k}{l}}^1},
      \end{equation*}
      i.e., $F$ does not depend on~$\chi_{\bwbz{k}{l}}^1$ for~$k+l=n_1$.
    \end{proof}
    Set $n_2 = n$; then~$n_1 = n+1$ and let~$F\in \mathcal{F}(\mathcal{U},
    n+1, n)$. The internal coordinates of maximal order on which the
    coefficients at~$\partial/\partial \chi_{\bwbz{k}{l}}^1$
    and~~$\partial/\partial \chi_{\bwbz{k}{l}}^2$ in~$\tilde{D}_w(F)$ depend are
    \begin{equation*}
\begin{array}{lcl}
 u_{w\bar{w}^{n+1}\bar{z}},u_{w\bar{w}^{n}\bar{z}^2},\dots,u_{w\bar{z}^{n+2}}
      &\text{for}&\displaystyle \frac{\partial}{\partial \chi_{\bwbz{k}{l}}^1},\\[5mm]
      u_{w\bar{w}^{n+1}\bar{z}},u_{w\bar{w}^{n}\bar{z}^2},\dots,u_{w\bar{w}\bar{z}^{n+1}}
      &\text{for}&\displaystyle \frac{\partial}{\partial \chi_{\bwbz{k}{l}}^2}.
    \end{array}\end{equation*}
    Differentiating $\tilde{D}_w(F)=0$ with respect to these coordinates, we see
    that~$F$ must be invariant with respect to the vector fields
    \begin{align*}
      Z_0&= \frac{\partial}{\partial u_{\bar{z}^{n+2}}} +
           \frac{1}{u_{\bar{w}}^2} \frac{\partial}{\partial
           \chi_{\bar{z}^{n+1}}^1},\\
      Z_1&=\frac{\partial}{\partial u_{\bar{w}\bar{z}^{n+1}}} +
           \frac{1}{u_{\bar{w}}^2} \frac{\partial}{\partial
           \chi_{\bar{w}\bar{z}^n}^1} +
          \frac{1}{2}\cdot\left(
           \frac{1}{u_{\bar{w}}^2} -
           \frac{1}{u_{\bar{w}}^3}\right)\chi^1
           \frac{\partial}{\partial \chi_{\bar{z}^n}^2},\\
         &\dots\\
      Z_i&=\frac{\partial}{\partial u_{\bar{w}^i\bar{z}^{n-i+2}}} +
           \frac{1}{u_{\bar{w}}^2} \frac{\partial}{\partial
           \chi_{\bar{w}^i\bar{z}^{n-i+1}}^1} +
          \frac{1}{2}\cdot\left(
           \frac{1}{u_{\bar{w}}^2} -
           \frac{1}{u_{\bar{w}}^3}\right)\chi^1
           \frac{\partial}{\partial \chi_{\bar{w}^{i-1}\bar{z}^{n-i+1}}^2},\\
      &\dots\\
      Z_{n+1}&=\frac{\partial}{\partial u_{\bar{w}^{n+1}\bar{z}}} +
           \frac{1}{u_{\bar{w}}^2} \frac{\partial}{\partial
           \chi_{\bar{w}^{n+1}}^1} +
          \frac{1}{2}\cdot\left(
           \frac{1}{u_{\bar{w}}^2} -
           \frac{1}{u_{\bar{w}}^3}\right)\chi^1
           \frac{\partial}{\partial \chi_{\bar{w}^n}^2}.
    \end{align*}
    Note that $[Z_i,Z_j] = 0$.

    We now use the induction on~$n$. The base
    is~$n=0$. Compute the commutator
    \begin{equation*}
      [\tilde{D}_w,Z_0] =
      -2\frac{u_{w\bar{w}}}{u_{\bar{w}}^2}\frac{\partial}{\partial
        \chi_{\bar{z}}^1} -
      \frac{1}{u_{\bar{w}}^2}\cdot\frac{u_{w\bar{w}}}{u_{\bar{w}^2}}
      \frac{\partial}{\partial \chi^2}.
    \end{equation*}
    Hence,~$F$ is invariant with respect to
    \begin{equation*}
      Z_0^1 = 2u_{\bar{w}}\frac{\partial}{\partial
        \chi_{\bar{z}}^1} +
      \frac{\partial}{\partial \chi^2}.
    \end{equation*}
    Next,
    \begin{equation*}
      [\tilde{D}_w,Z_0^1] = 2u_{w\bar{w}}\frac{\partial}{\partial
        \chi_{\bar{z}}^1} -
      2u_{\bar{w}}\frac{u_{w\bar{w}}}{u_{\bar{w}}^2}\frac{\partial}{\partial
        \chi^2} = 2\frac{u_{w\bar{w}}}{u_{\bar{w}}}\left(
        u_{\bar{w}}\frac{\partial}{\partial
        \chi_{\bar{z}}^1} - \frac{\partial}{\partial \chi^2}\right)
    \end{equation*}
    and thus the vector field
    \begin{equation*}
      Z_0^2 = u_{\bar{w}}\frac{\partial}{\partial
        \chi_{\bar{z}}^1} - \frac{\partial}{\partial \chi^2}
    \end{equation*}
    annihilates~$F$ as well. But
    \begin{equation*}
      \frac{\partial}{\partial \chi^2} = \frac{1}{2}(Z_0^1 - Z_0^2)
    \end{equation*}
    which means that $F$ does not depend on~$\chi^2$, i.e., we find ourselves
    in the situation of Step~\ref{sec:nontr-cons-laws-step-1}.

    Now pass to the induction step and assume~$n>0$. Just as in the preceding
    computations, we see that the commutator~$[\tilde{D}_w,Z_0]$ is
    proportional to the vector field
    \begin{equation*}
      Z_0^1 = 2u_{\bar{w}}\frac{\partial}{\partial \chi_{\bar{z}^{n+1}}^1} +
      \frac{\partial}{\partial \chi_{\bar{z}^n}^1},
    \end{equation*}
    while the commutator $[\tilde{D}_w,Z_0^1]$ is proportional to
    \begin{equation*}
      Z_0^2 = u_{\bar{w}}\frac{\partial}{\partial \chi_{\bar{z}^{n+1}}^1} -
      \frac{\partial}{\partial \chi_{\bar{z}^n}^2}
    \end{equation*}
    Thus, $F$ is independent of~$\chi_{\bar{z}^n}^2$, so we
    change the field~$Z_1$ to
    \begin{equation*}
      Z_1 = \frac{\partial}{\partial u_{\bar{w}\bar{z}^{n+1}}} +
      \frac{1}{u_{\bar{w}}^2}\cdot \frac{\partial}{\partial
        \chi_{\bar{w}\bar{z}^n}^2}.
    \end{equation*}
    Then $[\tilde{D}_w,Z_1]$ is proportional to the vector field
    \begin{equation*}
      Z_1^1 = 2u_{\bar{w}}\frac{\partial}{\partial \chi_{\bar{w}\bar{z}^n}^1}
      + \frac{\partial}{\partial \chi_{\bar{w}\bar{z}^{n-1}}^2},
    \end{equation*}
    while the commutator~$[\tilde{D}_w,Z_1^1]$ equals, up to a factor, to the vector
    field
    \begin{equation*}
      Z_1^2 = u_{\bar{w}}\frac{\partial}{\partial \chi_{\bar{w}\bar{z}^n}} -
      \frac{\partial}{\partial \chi_{\bar{w}\bar{z}^{n-1}}^2}.
    \end{equation*}
    Hence, $F$ is independent of~$\chi_{\bar{w}\bar{z}^{n-1}}^2$ and instead
    of~$Z_2$ we can take the vector field
    \begin{equation*}
      \frac{\partial}{\partial u_{\bar{w}^2\bar{z}^{n}}} +
           \frac{1}{u_{\bar{w}}^2} \frac{\partial}{\partial
           \chi_{\bar{w}^2\bar{z}^{n-1}}^1},
    \end{equation*}
    etc. Eventually, we shall arrive at the independence of~$F$ on all the
    variables~$\chi_{\bar{w}^i\bar{z}^{n-1}}^2$, and this completes the
    induction step for~$s=2$.
  \end{step}
  \begin{step}[the induction step]\label{sec:nontr-cons-laws-step-3}
    Let~$F\in\mathcal{F}(\mathcal{U},n_1,\dots,n_s)$
    \begin{lemma}\label{sec:nontr-lemma-cons-laws-4}
      One has $n_j=n_{j+1}+1$\textup{,} $j=1,\dots,s-1$\textup{,}
      i.e.\textup{,} $n_j = n +s - j$\textup{,} where $n = n_s$.
    \end{lemma}
    \begin{proof}[Proof of Lemma~\ref{sec:nontr-lemma-cons-laws-4}]
      The proof is similar to that of Lemma~\ref{sec:nontr-lemma-cons-laws-3}.
    \end{proof}
    Thus, $F\in \mathcal{F}(\mathcal{U},n+s-1,\dots,n)$, and the internal
    coordinates of maximal jet order on which the coefficients in the total derivatives
    at~$\partial/\partial \chi_{\bwbz{k}{l}}^i$ that act nontrivially on $F$ may depend are
    \begin{equation*}
    \begin{array}{lcl}
      u_{w\bar{w}^{n+s-1}\bar{z}},u_{w\bar{w}^{n+s-2}\bar{z}^2},\dots,u_{w\bar{z}^{n+s}}
      &\text{for} &\displaystyle\frac{\partial}{\partial \chi_{\bwbz{k}{l}}^1},\\[5mm]
      u_{w\bar{w}^{n+s-1}\bar{z}},u_{w\bar{w}^{n+s-2}\bar{z}^2},\dots,
        u_{w\bar{w}\bar{z}^{n+s-1}}
      &\text{for} &\displaystyle\frac{\partial}{\partial \chi_{\bwbz{k}{l}}^2}.
      \end{array}
    \end{equation*}
    Therefore, $F$ is invariant with respect to the vector fields
    \begin{align}
      Z_0&= \frac{\partial}{\partial u_{\bar{z}^{n+s}}} +
      \frac{1}{u_{\bar{w}}^2} \frac{\partial}{\partial
        \chi_{\bar{z}^{n+s-1}}^1},\nonumber\\
      Z_1&=\frac{\partial}{\partial u_{\bar{w}\bar{z}^{n+s-1}}} +
      \frac{1}{u_{\bar{w}}^2} \frac{\partial}{\partial
        \chi_{\bar{w}\bar{z}^{n+s-2}}^1} +
      \frac{1}{2}\cdot\left(
        \frac{1}{u_{\bar{w}}^2} -
        \frac{1}{u_{\bar{w}}^3}\right)\chi^1
      \frac{\partial}{\partial \chi_{\bar{z}^{n+s-2}}^2},\nonumber\\
      &\dots\nonumber\\
      Z_i&=\frac{\partial}{\partial u_{\bar{w}^i\bar{z}^{n+s-i}}} +
      \frac{1}{u_{\bar{w}}^2} \frac{\partial}{\partial
        \chi_{\bar{w}^i\bar{z}^{n+s-i-1}}^1} +
      \frac{1}{2}\cdot\left(
        \frac{1}{u_{\bar{w}}^2} -
        \frac{1}{u_{\bar{w}}^3}\right)\chi^1
      \frac{\partial}{\partial \chi_{\bar{w}^{i-1}\bar{z}^{n+s-i-1}}^2},\label{eq:9}\\
      &\dots\nonumber\\
      Z_{n+s-1}&=\frac{\partial}{\partial u_{\bar{w}^{n+s-1}\bar{z}}} +
      \frac{1}{u_{\bar{w}}^2} \frac{\partial}{\partial
        \chi_{\bar{w}^{n+s-1}}^1} +
      \frac{1}{2}\cdot\left(
        \frac{1}{u_{\bar{w}}^2} -
        \frac{1}{u_{\bar{w}}^3}\right)\chi^1
      \frac{\partial}{\partial \chi_{\bar{w}^{n+s-2}}^2}.\nonumber
    \end{align}
    We first fix~$s>2$ and set~$n=0$ (the base of induction). Then
    \begin{equation*}
      Z_0 = \frac{\partial}{\partial u_{\bar{z}}^s} +
      \frac{1}{u_{\bar{w}}^2}\cdot\frac{\partial}{\partial
        \chi_{\bar{z}^{s-1}}^1}.
    \end{equation*}
    Hence the commutator~$[\tilde{D}_w,Z_0]$ is proportional to
    \begin{equation*}
      Z_0^1 = 2u_{\bar{w}}\frac{\partial}{\partial \chi_{\bar{z}^{s-1}}^1} +
      \frac{\partial}{\partial \chi_{\bar{z}^{s-2}}^2},
    \end{equation*}
    for the commutator $[\tilde{D}_w,Z_0^1]$ we get
    \begin{equation*}
      Z_0^2 = 3u_{\bar{w}}\frac{\partial}{\partial \chi_{\bar{z}^{s-2}}^2} +
      \frac{\partial}{\partial \chi_{\bar{z}^{s-3}}^3},
    \end{equation*}
    etc., and $[\tilde{D}_w,Z_0^{s-2}]$ equals, up to a functional factor, to the vector
    field
    \begin{equation*}
      Z_0^{s-1} =su_{\bar{w}}\frac{\partial}{\partial \chi_{\bar{z}^{s-1}}} +
      \frac{\partial}{\partial \chi^s}
    \end{equation*}

    Finally, for $[\tilde{D}_w,Z_0^{s-1}]$ we get
    \begin{equation*}
      Z_0^s = u_{\bar{w}}\frac{\partial}{\partial \chi_{\bar{z}}^{s-1}} -
      \frac{\partial}{\partial \chi^s}.
    \end{equation*}
    Therefore,
    \begin{equation*}
      \frac{\partial}{\partial \chi^s} = \frac{1}{s+1}(Z_0^{s-1} - sZ_0^s),
    \end{equation*}
    and $F$ does not depend on~$\chi^s$.

    Let now~$n>0$. Applying a similar procedure of step-by-step commutation
    with~$Z_0$, we find that we have
    \begin{equation}\label{eq:8}
      Z_0^i = [\tilde{D}_w,Z_0^{i-1}] \sim
      (i+1)u_{\bar{w}}\frac{\partial}{\partial \chi_{\bar{z}^{n+s-i}}^{i}} +
      \frac{\partial}{\partial \chi_{\bar{z}^{n+s-i-1}}^{i+1}},
    \end{equation}
    where~$\sim$ denotes proportionality. In particular,
    \begin{equation*}
      Z_0^{s-1} =su_{\bar{w}}\frac{\partial}{\partial \chi_{\bar{z}^{n+1}}^{s-1}} +
      \frac{\partial}{\partial \chi_{\bar{z}^{n}}^s}.
    \end{equation*}
    Then we have
    \begin{equation*}
      [\tilde{D}_w,Z_0^{s-1}] \sim u_{\bar{w}}\frac{\partial}{\partial
        \chi_{\bar{z}^{n+1}}^{s-1}} - \frac{\partial}{\partial
        \chi_{\bar{z}^n}^s} = Z_0^s.
    \end{equation*}
    This leads to the invariance of $F$ with respect to the
    fields~$\partial/\partial \chi_{\bar{z}^{n+1}}^{s-1}$
    and~$\partial/\partial \chi_{\bar{z}^n}^s$. In addition, using
    equations~\eqref{eq:8} we obtain by induction independence of $F$ on all
    the variables~$\chi_{\bar{z}^{n+i}}^{s-i}$. In particular, this means that
    instead of $Z_1$ in equations~\eqref{eq:9} we can consider the vector field
    \begin{equation*}
      Z_1=\frac{\partial}{\partial u_{\bar{w}\bar{z}^{n+s-1}}} +
      \frac{1}{u_{\bar{w}}^2} \frac{\partial}{\partial
        \chi_{\bar{w}\bar{z}^{n+s-2}}^1}.
    \end{equation*}
    Using~\eqref{eq:10} we get
    \begin{equation*}
      [\tilde{D}_w,Z_1] = -2\frac{u_{w\bar{w}}}{u_{\bar{w}}^3}
      \frac{\partial}{\partial \chi_{\bar{w}\bar{z}^{n+s-2}}^1} -
      \frac{1}{u_{\bar{w}}^2}\left(\frac{u_{w\bar{w}}}{u_{\bar{w}}^2}
        \frac{\partial}{\partial
        \chi_{\bar{w}\bar{z}^{n+s-3}}^2} +
      \frac{u_{w\bar{z}}}{u_{\bar{w}}^2} \frac{\partial}{\partial
        \chi_{\bar{z}^{n+s-2}}^2}\right).
    \end{equation*}
    But, thanks to the previous remark, the last summand can be omitted and we can
    set
    \begin{equation*}
      Z_1^1 = 2u_{\bar{w}}\frac{\partial}{\partial \chi_{\bar{w}\bar{z}^{n+s-2}}^1} +
      \frac{\partial}{\partial \chi_{\bar{w}\bar{z}^{n+s-3}}^2}.
    \end{equation*}
    Proceeding in a similar fashion as above, we shall obtain the vector fields
    \begin{equation*}
      Z_1^i = (i+1)u_{\bar{w}}\frac{\partial}{\partial
        \chi_{\bar{w}\bar{z}^{n+s-i-1}}^i} + \frac{\partial}{\partial
        \chi_{\bar{w}\bar{z}^{n+s-i-2}}^{i+1}}.
    \end{equation*}
    In particular,
    \begin{equation*}
      Z_1^{s-1} = su_{\bar{w}}\frac{\partial}{\partial
        \chi_{\bar{w}\bar{z}^{n+s}}^{s-1}} + \frac{\partial}{\partial
        \chi_{\bar{w}\bar{z}^{n+s-1}}^s}
    \end{equation*}
    and
    \begin{equation*}
      Z_1^s = u_{\bar{w}}\frac{\partial}{\partial
        \chi_{\bar{w}\bar{z}^{n+s}}^{s-1}} - \frac{\partial}{\partial
        \chi_{\bar{w}\bar{z}^{n+s-1}}^s}.
    \end{equation*}
    Using the same reasoning as for the fields~$Z_0^i$, we deduce that~$F$
    does not depend on the variables~$\chi_{\bar{w}\bar{z}^{n+i-1}}^{s-i}$,
    etc. Eventually, we shall arrive at the independence of~$F$ of all
    variables~$\chi_{\bar{w}^i\bar{z}^{n+s-i}}^s$. This completes the induction step.
  \end{step}
  Theorem~\ref{sec:nontr-thm-cons-laws-1} is proved.
\end{proof}

\section{Closing remarks}
\label{sec:discussion}

In the present paper we have found infinitely many nonlocal conservation laws for the Przanowski equation and established their nontriviality.

To put these results into a context, recall that employing isospectral Lax pairs for the generation of conservation laws dates back to 1968, see \cite{MGK} where an infinite hierarchy of local conservation laws for the celebrated Korteweg--de Vries equation was constructed. However, as we have seen above, nonisospectral Lax pairs can be successfully employed
for the same purpose too. A similar procedure was applied to other equations in~\cite{BBKM, KS-2015, KSM-2016, as} and can be canonically associated to any differential covering in a geometrical framework, see \cite{NGC}. On the other hand, while isospectral Lax pairs can also be applied for the construction of recursion operators in a fairly straightforward fashion, cf.\ e.g.\ \cite{BKMV, d, k, mos, asj} and references therein, to the best of our knowledge this does not seem to be the case for nonisospectral Lax pairs, especially in the case of more than two independent variables.

Note that the proof of nontriviality of the conservation laws obtained from a nonisospectral Lax pair can be quite hard, see the proof of Theorem~\ref{sec:nontr-thm-cons-laws-1} above, and, moreover, special handling of each particular equation could be required, cf.\ e.g.\ \cite{BBKM, KS-2015, KSM-2016}.
The conservation laws resulting from the procedure in question are usually nonlocal, and it is natural to ask which is the use and meaning thereof.

Let~$\mathcal{E}$ be a partial differential system in $m$ dependent variables $u^A$ and $n$ independent variables $x^i$, and~$\tau\colon \tilde{\mathcal{E}}\to \mathcal{E}$ be a covering over this system, cf.\ e.g.\ \cite{AMS-book, kvv} for details. Then a $\tau$-nonlocal conservation law of~$\mathcal{E}$ is nothing but a ``usual'', i.e., local, conservation law of the covering system~$\tilde{\mathcal{E}}$. Thus, a $\tau$-nonlocal conservation law for~$\mathcal{E}$ is a closed $(n-1)$-differential form on~$\mathcal{E}$ whose coefficients can depend not just on the quantities $\frac{\partial^{\abs{\sigma}}u^A}{\partial x^\sigma}$, but, informally speaking, also on the quantities like $\int\sum_{i=1}^{n}X_i^\alpha\,dx^i$, where
  \begin{equation*}
    \frac{\partial w^\alpha}{\partial x^i} = X_i^\alpha
  \end{equation*}
are the equations that define the nonlocal variables $w^\alpha$ in the covering~$\tau$. So, in many respects, 
nonlocal conservation laws are quite similar to local ones, and share a number of possible applications with the latter.

For instance, for an evolutionary system nonlocal conservation laws yield constants of motion in essentially the same way as local ones. If, moreover, the evolutionary system under study admits a Hamiltonian structure, one can, under certain conditions, employ this structure to obtain (in general, nonlocal) symmetries from the nonlocal conserved densities in the same fashion as for the densities of local conservation laws, cf.\ e.g.\ \cite{o}.
On the qualitative level, the presence of an infinite hierarchy of nonlocal conservation laws, just as for that of their local counterparts, can be seen as indicative of integrability as it imposes strong constraints on the associated dynamics making the latter highly regular, cf.\ e.g.\ \cite{ac, d} and references therein.\looseness=-1

In closing note that our computations revealed only two local conservation
  laws of the Przanowski equation~\eqref{prz}: $\omega_1$ given by
  formulas~\eqref{omega1} and its ``twin'' arising as a result of the discrete
  symmetry mentioned at the end of Section~\ref{mr}, that is, $w\leftrightarrow\bar{w}$,
$z\leftrightarrow\bar{z}$. Symbolic computations
  show that there exist no other local conservation laws of the order up to four.
  Unfortunately we have no rigorous nonexistence proof for arbitrarily high order, and in fact it would be quite interesting to find such a proof.

\section*{Acknowledgments}
\label{sec:acknowledgements}

The work of IK was partially supported by the RFBR Grant 18-29-10013 and IUM-Simons Foundation. The research of AS was supported in part by the Ministry of Education, Youth and  Sport of the Czech Republic (M\v{S}MT \v{C}R) under RVO funding for I\v{C}47813059 and the Grant Agency of the Czech Republic (GA \v{C}R) under grant P201/12/G028. AS is pleased to thank A. Borowiec, M. Dunajski and K. Krasnov for stimulating discussions. AS also warmly thanks the Institute of Theoretical Physics of the University of Wroc\l aw, and especially A. Borowiec, for the warm hospitality extended to him in the course of his visits to Wroc\l aw, where some parts of the present article were worked on.

It is our great pleasure to thank the anonymous referee for useful and relevant
comments.

\end{document}